\newcommand{\NP}[2]{N_{#1}^- (#2)}
\newcommand{\NS}[2]{N_{#1}^+ (#2)}
\newcommand{\sym}[1]{#1^\leftrightarrow}
\newcommand{\nsym}[1]{#1^\rightarrow}
\newcommand{\FVS}[1]{\mathrm{FVS} (#1)}
\newcommand{\bigCal}[1]{\mathcal{#1}}
\newcommand{\Loop}[1]{\textsc{loop}(#1)}
\newcommand{\Inz}[1]{\textsc{in0}(#1)}
\newcommand{\Outz}[1]{\textsc{out0}(#1)}
\newcommand{\Ino}[1]{\textsc{in1}(#1)}
\newcommand{\Outo}[1]{\textsc{out1}(#1)}
\newcommand{\InD}[1]{\textsc{indiclique}(#1)}
\newcommand{\OutD}[1]{\textsc{outdiclique}(#1)}
\newcommand{\Pie}[1]{\textsc{pie}(#1)}
\newcommand{\Core}[1]{\textsc{core}(#1)}
\newcommand{\Dome}[1]{\textsc{dome}(#1)}
\newtheorem{proposition}{Proposition}
\newtheorem{theorem}{Theorem}
\newtheorem{lemma}{Lemma}
\title{On the Confluence of Directed Graph Reductions Preserving Feedback Vertex Set Minimality}
\author{Moussa Abdenbi
\institute{Université du Québec à Montréal\\
Québec, Canada}
\email{abdenbi.moussa@uqam.ca}
\and
Alexandre Blondin Massé
\institute{Université du Québec à Montréal\\
Québec, Canada}
\email{blondin\_masse.alexandre@uqam.ca}
\and
Alain Goupil
\institute{Université du Québec À Trois-Rivières\\
Québec, Canada}
\email{alain.goupil@uqtr.cau}
\and
Odile Marcotte
\institute{Université du Québec à Montréal\\
Québec, Canada}
\email{odile.marcotte@videotron.ca}
}
\begin{document}
\maketitle

\section{Introduction}

In graph theory, the minimum directed feedback vertex set (FVS) problem consists in identifying the smallest subsets of vertices in a directed graph whose deletion renders the directed graph acyclic. 
In other words, a FVS in a directed graph $G$ with vertex set $V$ is a subset of $V$ with a nonempty intersection with every circuit of $G$.
Computing a minimum cardinality FVS (MFVS) is NP-hard \cite{garey-johnson,karp}. 
In this extended abstract we investigate graph reductions that preserve all or some minimum cardinality FVS and we focus on their properties, especially the Church-Rosser property, also called confluence. 
The Church-Rosser property implies the irrelevance of reduction order, leading to a unique digraph \cite{church-rosser}. 
We explore graph reductions proposed for solving the MFVS problem, preserving the collection of MFVS or at least one of them \cite{lemaic,levy-low,lin-jou}. 
The study seeks the largest set of reductions with the Church-Rosser property and explores the adaptability of reductions to meet this criterion. 
Addressing these questions is crucial, as it may have algorithmic implications, including potential parallelization and speeding up sequential algorithms in graph classes with polynomial algorithms \cite{ehrig-rosen,rosen}.

For sake of completeness, we recall some  definitions and notation from graph theory.

A \emph{directed graph} (or \emph{digraph}) is an ordered pair $G=(V,E)$ where $V$ is a finite set of \emph{vertices} and $E \subseteq V \times V$ is a set of arcs.
Let $G=(V,E)$ be a digraph, $u,v \in V$ and $U \subseteq V$.
We denote $\NS{G}{u} = \{ s \in V \mid (u,s) \in E \}$ and $\NP{G}{u} = \{ p \in V \mid (p,u) \in E \}$ the set of successors and the set of predecessors of $u$ respectively.
For  a vertex $u \in V$, $G - u$ denotes the digraph whose set of vertices is $V \setminus \{ u \}$ and  whose set of arcs is $E \setminus \{ (x,y) \in E \mid x = u \mbox{ or } y = u \}$. Accordingly, for a given arc $(u,v) \in E$,   $G - (u,v)$ is the digraph where the set of vertices is $V$ and  the set of arcs is $E \setminus \{ (u,v) \}$.
Similarly, the digraph $G \circ u$ is the digraph where the set of vertices is $V \setminus \{ u \}$ and  the set of arcs is $(E \setminus \{ (x,y) \in E \mid x = u \mbox{ or } y = u \}) \cup \NP{G}{u} \times \NS{G}{u}$.

For $e = (u,v) \in E$, we say that $e$ is a \emph{2-way arc} in $G$ if $(v,u) \in E$.
The set of all 2-way arcs of $G$ is denoted by $\sym{E} = \{ (u,v) \in E \mid (v,u) \in E \}$.
Given a digraph $G=(V,E)$ we distinguish two special digraphs $\sym{G} = (V, \sym{E})$ and $\nsym{G} = (V, \nsym{E})$ where $\nsym{E} = E \setminus \sym{E}$.

A \emph{(directed) path} of $G$ is a sequence $p = (v_1,v_2,\ldots,v_k)$ of vertices  $v_i \in V$ for $i = 1,2,\ldots,k$ such that  $(v_i,v_{i+1}) \in E$ for $i = 1,2,\ldots,k-1$.
Moreover, a path $p$ is called a \emph{circuit} if $v_1 = v_k$.
An arc $(u,u)$ is called a \emph{loop}.
We say that $G$ is \emph{acyclic}, if there is no circuit in $G$.
Given $U \subseteq V$, we say that $U$ is a \emph{directed clique} or \emph{diclique} of $G$ if for each $u,v \in U$ and $u \neq v$, we have $(u,v) \in E$ and $(u,u) \notin E$.

A set $U \subseteq V$ is called a \emph{feedback vertex set} if $G'=(V',E')$ where $V' = V \setminus U$ and $E' = E \setminus \{ (u,v) \mid u \in U \mbox{ or } v \in U \}$ is acyclic.
The set of all feedback vertex sets of $G$ is denoted by $\FVS{G}$.
The set  of all \emph{minimal feedback vertex sets}, in short $MFVS(G)$, is the set of feedback vertex sets with minimal cardinality.

\section{Reductions}\label{SS:reductions}

Given a digraph $G=(V,E)$, the problem of finding a minimum feedback vertex set is NP-hard \cite{garey-johnson}.
However, in some cases we can use a set of \emph{transformations} by which the size of the input graph can be reduced, with the guarantee that at least one minimum feedback vertex set in $G$ could be constructed from a minimum feedback vertex set in the reduced graph, in polynomial time. 
These transformations are called \emph{digraph reductions}.

In this context, by digraph reduction we mean a transformation of the digraph $G=(V,E)$ into a digraph $G'=(V',E')$ such that (1) either $|V'| < |V|$, or $|V'| = |V|$ and $|E'| < |E|$, and (2) an MFVS of $G$ can be computed in polynomial time from any MFVS of $G'$.

In the following, we give a brief description of Levy and Low's \cite{levy-low} simple and straightforward reductions, followed by a generalization of two of their reductions  by Lemaic \cite{lemaic}, and additional reductions from Lin and Jou \cite{lin-jou}. 
Let $G=(V,E)$ be a digraph and $u,v \in V$.
\begin{itemize}
    \item The precondition of the reduction $\Loop{u}$ is $(u,u) \in E$.
    This reduction transfoms $G=(V,E)$ in $G - u$ and adds $u$ to the MFVS in construction.
    \item The precondition of $\Inz{u}$ is $\NP{G}{u} = \emptyset$. 
    This reduction transfoms $G=(V,E)$ in $G - u$.
    \item The precondition of $\Outz{u}$ is $\NS{G}{u} = \emptyset$.
    This reduction transfoms $G=(V,E)$ in $G - u$.
    \item The precondition of $\Ino{u}$ is $(u,u) \notin E$ and $|\NP{G}{u}| = 1$. 
    The transformation consists of replacing $G$ by the digraph $G \circ u$. 
    This reduction does not necessarily preserve all the FVS of the original digraph but every MFVS of the reduced digraph is also an MFVS of the original graph.
    \item The precondition of $\Outo{u}$ is $(u,u) \notin E$ and $|\NS{G}{u}| = 1$.
    The transformation consists of replacing $G$ by the digraph $G \circ u$. 
    This reduction does not necessarily preserve all the FVS of the original digraph but every MFVS of the reduced digraph is also an MFVS of the original digraph.
\end{itemize}

Lemaic \cite{lemaic} proposed a generalization of $\textsc{in1}$ and $\textsc{out1}$ based on the diclique concept. 

\begin{figure}
    \centering
    \begin{tikzpicture}[vertexScale/.style={scale=0.9}]
        \begin{scope}[scale=0.65]
            \node[vertexScale] (u) at (0,0) {$u$};
                \node[vertexScale] (p1) at ($ (u) + (160:3.3) $) {$p_1$};
                \node[vertexScale] (p2) at ($ (u) + (180:1.75) $) {$p_2$};
                \node[vertexScale] (p3) at ($ (u) + (-160:3.3) $) {$p_3$};
                \node[vertexScale] (s1) at ($ (u) + (35:1.9) $) {$s_1$};
                \node[vertexScale] (s2) at ($ (u) + (-35:1.9) $) {$s_2$};
                
            \path[->,draw]
                (u) edge [] node {} (s1)
                (u) edge [] node {} (s2)
                (p2) edge [] node {} (u)
                (p1) edge [out=-20,in=110] node {} (p2)
                (p2) edge [out=150,in=-50] node {} (p1)
                (p3) edge [out=20,in=-110] node {} (p2)
                (p2) edge [out=-150,in=50] node {} (p3)
                (p1) edge [out=-75,in=75] node {} (p3)
                (p3) edge [out=105,in=-105] node {} (p1);
            \path[->,draw,bend left]
                (p1) edge [] node {} (u);
            \path[->,draw,bend right]
                (p3) edge [] node {} (u);
    
            \node [scale=0.9] () at (-0.75,-2.5) {(a)};
        \end{scope}
    
        \begin{scope}[xshift=5cm,scale=0.65]
            \node[vertexScale] (u) at (0,0) {};
                \node[vertexScale] (p1) at ($ (u) + (160:3.3) $) {$p_1$};
                \node[vertexScale] (p2) at ($ (u) + (180:1.75) $) {$p_2$};
                \node[vertexScale] (p3) at ($ (u) + (-160:3.3) $) {$p_3$};
                \node[vertexScale] (s1) at ($ (u) + (35:1.9) $) {$s_1$};
                \node[vertexScale] (s2) at ($ (u) + (-35:1.9) $) {$s_2$};
                
            \path[->,draw]
                (p2) edge [] node {} (s1)
                (p2) edge [] node {} (s2)
                (p1) edge [out=-20,in=110] node {} (p2)
                (p2) edge [out=150,in=-50] node {} (p1)
                (p3) edge [out=20,in=-110] node {} (p2)
                (p2) edge [out=-150,in=50] node {} (p3)
                (p1) edge [out=-75,in=75] node {} (p3)
                (p3) edge [out=105,in=-105] node {} (p1);
            \path[->,draw,bend right]
                (p3) edge [] node {} (s2)
                (p3) edge [] node {} (s1);
            \path[->,draw,bend left]
                (p1) edge [] node {} (s1)
                (p1) edge [] node {} (s2);
                
            \node [scale=0.9] () at (-0.75,-2.5) {(b)};
        \end{scope}
    \end{tikzpicture}
    \caption{Illustration of the \textsc{indiclique} reduction. 
    (a) $\NP{G}{u}$ is a diclique, so $\InD{u}$ is applicable. 
    (b) We remove $u$ and all its incident arcs, and we add the new arcs $(p_i,s_j)$ for $i \in \{ 1,2,3 \}$ and $j \in \{ 1,2 \}$.}
    \label{fig:indic}
\end{figure}
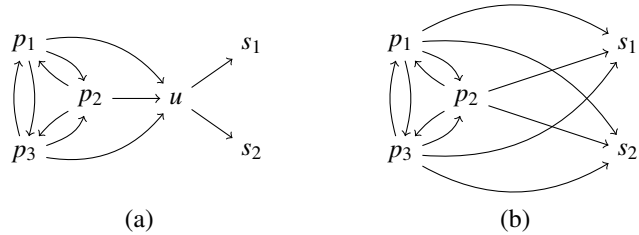

\begin{itemize}
    \item The precondition of $\InD{u}$ is $(u,u) \notin E$ and $\NP{G}{u}$ forms a diclique in $G$. 
    The transformation consists of replacing $G$ by $G \circ u$. 
    Similary to $\Ino{u}$, this reduction, illustrated in Figure~\ref{fig:indic}, does not necessarily preserve all the FVS of the original digraph but every MFVS of the reduced digraph is also an MFVS of the original digraph.
    \item The precondition of $\OutD{u}$ is $(u,u) \notin E$ and $\NS{G}{u}$ forms a diclique in $G$. 
    The transformation consists of replacing $G$ by $G \circ u$.
    This reduction does not necessarily preserve all the FVS of the original digraph but every MFVS of the reduced digraph is also an MFVS of the original digraph. 
\end{itemize}
Lin and Jou extended the work of Levy and Low by proposing the following three reductions \cite{levy-low,lin-jou}.
\begin{itemize}
    \item The precondition of $\Pie{u,v}$, for an arc $(u,v)$ of $G$ that is not a 2-way arc, is the following: there is no circuit in the digraph $\nsym{G}$ going through arc $(u,v)$.
    The transformation consists of replacing $G$ with $G - (u,v)$. 
    This reduction preserves all the FVS of the original digraph.
    \item The precondition of $\Core{u}$ is $\{u\} \cup \NP{G}{u} \cup \NS{G}{u}$ is a diclique of $G$.
    The transformation consists of removing all vertices $x \in \NP{G}{u} \cup \NS{G}{u}$ and add them to the MFVS and we replace $G$ with $G - x$.
    This reduction does not necessarily preserve all the FVS of the original digraph but every MFVS of the reduced digraph is also an MFVS of the original digraph. 
    \item The precondition of $\Dome{u,v}$, for an arc $(u,v)$ of $G$, is $\NP{\nsym{G}}{u} \subseteq \NP{G}{v}$ (first case) or $\NS{\nsym{G}}{v} \subseteq \NS{G}{u}$ (second case). 
    The transformation consists of replacing $G$ with $G - (u,v)$. 
    This reduction preserves all the FVS of the original digraph.
\end{itemize}
See Figure~\ref{fig:lin-jou} for illustrations of the preconditions of the Lin and Jou reductions. 
\begin{figure}
    \centering
    \begin{tikzpicture}[vertexScale/.style={scale=0.9}, highred/.style={thick, draw=red}, highblue/.style={thick, draw=blue, fill=blue!30}]
        \begin{scope}[scale=0.65]
            \node[] (u) at (-1.5,0) {};
                \node[vertexScale] (1) at ($ (u) + (0:1.25) $) {a};
                \node[vertexScale] (2) at ($ (u) + (90:1.5) $) {b};
                \node[vertexScale] (3) at ($ (u) + (180:1.25) $) {c};
                \node[vertexScale] (4) at ($ (u) + (-90:1.5) $) {d};
            \node[] (v) at (1,0.5) {};
                \node[vertexScale] (5) at ($ (v) + (0:1) $) {u};
                \node[vertexScale] (6) at ($ (v) + (90:1) $) {v};
                \node[vertexScale] (7) at ($ (v) + (-90:1) $) {w};
                
            \path[->,draw,highblue]
                (1) edge [] node {} (7)
                (2) edge [] node {} (6);
            \path[->,draw,highred]
                (7) edge [] node {} (5)
                (6) edge [] node {} (7);
            \path[->,draw,bend left]
                (1) edge [] node {} (2)
                (2) edge [] node {} (3)
                (3) edge [] node {} (4)
                (4) edge [] node {} (1)
                (6) edge [] node {} (5)
                (5) edge [] node {} (6)
                (2) edge [] node {} (1)
                (3) edge [] node {} (2)
                (4) edge [] node {} (3)
                (1) edge [] node {} (4);
            \node [scale=0.9] () at (-0.25,-3.5) {(a)};
        \end{scope}
        
        \begin{scope}[scale=0.65,xshift=5.5cm]
            \node[vertexScale] (c) at (0,0) {$u$};
            \node[vertexScale] (u1) at ($ (c) + (60:2) $) {$u_1$};
            \node[vertexScale] (u2) at ($ (c) + (-60:2) $) {$u_2$};
            \node[vertexScale] (u3) at ($ (c) + (180:2) $) {$u_3$};
    
            \path[->,draw] 
                (c) edge [out=50,in=-110] node {} (u1)
                (u1) edge [out=-130,in=70] node {} (c) 
                (c) edge [in=130,out=-70] node {} (u2)
                (u2) edge [in=-50,out=110] node {} (c)
                (c) edge [out=170,in=10] node {} (u3)
                (u3) edge [out=-10,in=-170] node {} (c)
                (u1) edge [out=-100,in=100] node {} (u2)
                (u2) edge [out=80,in=-80] node {} (u1)
                (u3) edge [out=20,in=-140] node {} (u1)
                (u1) edge [out=-160,in=40] node {} (u3)
                (u3) edge [out=-40,in=160] node {} (u2)
                (u2) edge [out=140,in=-20] node {} (u3);
            \node [scale=0.9] () at (-0.25,-3.5) {(b)};
        \end{scope}
        
        \begin{scope}[scale=0.65,xshift=10cm,yshift=1.5cm]
            \node[vertexScale] (u) at (0,0) {$u$};
                \node[vertexScale] (p1) at ($ (u) + (140:2) $) {$p_1$};
                \node[vertexScale] (p2) at ($ (u) + (-170:2) $) {$p_2$};
                \node[vertexScale] (p3) at ($ (u) + (-140:2) $) {$p_3$};
            \node[vertexScale] (v) at (2,0) {$v$};
    
            \path[->,draw,highblue]
                (u) edge [] node {} (v);
            \path[->,draw]
                (p1) edge [] node {} (v)
                (p2) edge [] node {} (u)
                (p3) edge [] node {} (v)
                (p1) edge [] node {} (u)
                (p3) edge [] node {} (u);
    
            \path[->,draw, bend left]   
                (p2) edge [out=30,in=160] node {} (v);
        \end{scope}
        \begin{scope}[scale=0.65,xshift=10cm,yshift=-1.5cm]
            \node[vertexScale] (u) at (-1,0) {$u$};
            \node[vertexScale] (v) at (1,0) {$v$};
                \node[vertexScale] (p1) at ($ (v) + (40:2) $) {$s_1$};
                \node[vertexScale] (p2) at ($ (v) + (-10:2) $) {$s_2$};
                \node[vertexScale] (p3) at ($ (v) + (-40:2) $) {$s_3$};
    
            \path[->,draw,highblue]
                (u) edge [] node {} (v);
            \path[<-,draw]
                (p1) edge [] node {} (v)
                (p3) edge [] node {} (v)
                (p1) edge [] node {} (u)
                (p3) edge [] node {} (u)
                (p2) edge [] node {} (v);
    
            \path[->,draw]   
                (u) edge [out=-10,in=180] node {} (p2);
            \node [scale=0.9] () at (-0.25,-2) {(c)};
        \end{scope}
    \end{tikzpicture}
    \caption{Illustration of Lin and Jou reductions preconditions. (a) $\textsc{pie}$ is applicable on the blue and red arcs. Indeed, there no circuit going through the blue arcs $(b,v)$ and $(a,w)$ in $G$, and therefore in $\nsym{G}$ and the same is true for the red arcs $(v,w)$ and $(w,u)$ in $\nsym{G}$. Therefore we can remove the blue and red arcs from $G$.
    (b) $\textsc{core}(u)$ is applicable since $u$ and its neighbors, $\{u,u_1,u_2,u_3\}$ form a diclique. So we can add $\{u_1,u_2,u_3\}$ to the MFVS and remove them from $G$.
    (c) In the top the first case of $\textsc{dome}$ and in the bottom the second case. The arc $(u,v)$ is dominated and we can remove it from $G$.}
    \label{fig:lin-jou}
\end{figure}
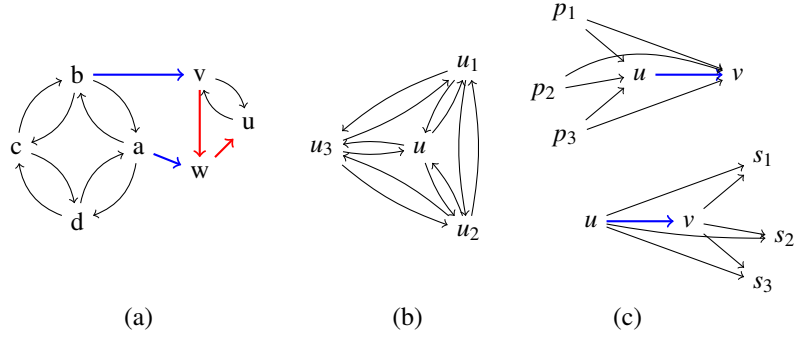

\section{The finite Church-Rosser property}

Another way to see digraph reductions is to consider them as binary relations on the set of all digraphs $\bigCal{G}$.
More precisely, a reduction $R$ can be seen as a binary relation $\bigCal{R} \subseteq \bigCal{G} \times \bigCal{G}$. 
Hence for $G,G' \in \bigCal{G}$ if we can reduce $G$ to $G'$ with the reduction $R$, then we say that $(G,G') \in \bigCal{R}$.
For a given reduction relation $\bigCal{R}$, we say that $G \in \bigCal{G}$ is \emph{$\bigCal{R}$-irreducible} (or simply \emph{irreducible} when the context is clear) if there does not exist $G' \in \bigCal{G}$ such that $(G,G') \in \bigCal{R}$.

Now, given some $G \in \bigCal{G}$, one might wish to \emph{reduce} $G$ as much as possible by using the following procedure: (Step 1) if there is no $G'$ such that $(G,G') \in \bigCal{R}$, then stop; (Step 2) otherwise, pick any $G' \in \bigCal{G}$ such that $(G,G') \in \bigCal{R}$; (Step 3) replace $G$ by $G'$ and repeat the previous steps.
However, there is no guarantee that the final digraph is unique, since there might be more than one available candidate for $G'$ at Step 2).
An important property that could be satisfied by a set of reductions is the \emph{Church-Rosser finiteness} property \cite{church-rosser} also called \emph{confluence} \cite{baader-nipkow}. 
According to this property, the order in which a sequence of reductions is applied does not affect the final reduced graph.

In order to introduce more formally this property, we need some additional definitions. 
Let $\bigCal{R} \subseteq \bigCal{S} \times \bigCal{S}$ be any binary relation on a set $\bigCal{S}$ and write $x \bigCal{R} y$ whenever $(x,y) \in \bigCal{R}$. 
The \emph{reflexive closure} of $\bigCal{R}$, denoted by $\bigCal{R}^R$, is given by $\bigCal{R}^R = \bigCal{R} \cup \{ (x,x) \mid x \in \bigCal{S} \}$.
Its \emph{transitive closure} is defined by $\bigCal{R}^T = \bigcup_{i = 1}^{\infty} \bigCal{R}^i$ where $\bigCal{R}^i$ is the composition of $\bigCal{R}$ with itself $i$ times.
The \emph{reflexive-transitive closure} of $\bigCal{R}$ is then defined by $\bigCal{R}^{RT} = \bigCal{R}^R \cup \bigCal{R}^T$.
The \emph{completion} of $\bigCal{R}$ is given by $\bigCal{R}^C = \{ (x, y) \in \bigCal{R}^{RT} \mid \mbox{there does not exist\ $z \in \bigCal{S}$ \mbox{such that} $(y, z) \in \bigCal{R}$}\}$.
A pair $(\bigCal{S}, \bigCal{R})$ is called \emph{finite} if for $x,y \in \bigCal{S}$, there is a constant $k$ such that if $x \bigCal{R}^i y$, then $i \leq k$. 

We say that $(\bigCal{S}, \bigCal{R})$ has the \emph{Church-Rosser finiteness property}, if $(\bigCal{S}, \bigCal{R})$ is finite and for $x,y,z \in \bigCal{S}$, if $(x,y) \in \bigCal{R}^C$ and $(x,z) \in \bigCal{R}^C$, then $y = z$. 
The following theorem proved in \cite{sethi} gives a simpler test for Church-Rosser finiteness property.
\begin{theorem}[Sethi \cite{sethi}]\label{th:confluence}
    Let $\bigCal{R}$ be a relation on a set $\bigCal{S}$.
    Then $(\bigCal{S}, \bigCal{R})$ is Church-Rosser finite if and only if  $(\bigCal{S}, \bigCal{R})$ is finite and, for all $x,y,z \in \bigCal{S}$, the conditions $x \bigCal{R} y$ and $x \bigCal{R} z$ imply that there exists $w \in \bigCal{S}$ such that $y \bigCal{R}^T w$ and $z \bigCal{R}^T w$.
\end{theorem}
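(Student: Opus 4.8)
\medskip
\noindent\textbf{Proof sketch.}
The statement is a repackaging of \emph{Newman's Lemma}: a terminating relation is confluent if and only if it is locally confluent. Unfolding the definitions, ``$(\bigCal S,\bigCal R)$ is Church-Rosser finite'' means ``$(\bigCal S,\bigCal R)$ is finite, and every element has a unique $\bigCal R$-irreducible descendant'', while the displayed joinability condition is exactly local confluence. The plan is to prove the two implications in turn, the forward one being routine and the backward one carrying the substance. It is convenient to phrase everything with the reflexive-transitive closure $\bigCal R^{RT}$ (so that a common descendant may be reached in zero steps); the condition of the statement is the special case where $x$ reaches $y$ and $z$ in one step.

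\smallskip
\noindent\emph{($\Rightarrow$).}
Assume $(\bigCal S,\bigCal R)$ is Church-Rosser finite; finiteness is then part of the hypothesis, so only joinability needs checking. Given $x\,\bigCal R\,y$ and $x\,\bigCal R\,z$, finiteness lets me reduce $y$ and $z$ as far as possible, reaching irreducible $\hat y$ and $\hat z$ with $y\,\bigCal R^{RT}\hat y$ and $z\,\bigCal R^{RT}\hat z$. Composing with the first steps, $x\,\bigCal R^{RT}\hat y$ and $x\,\bigCal R^{RT}\hat z$, so $(x,\hat y),(x,\hat z)\in\bigCal R^{C}$; Church-Rosser finiteness forces $\hat y=\hat z$, and $w:=\hat y$ is a common descendant of $y$ and $z$.

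\smallskip
\noindent\emph{($\Leftarrow$).}
Assume $(\bigCal S,\bigCal R)$ is finite and locally confluent. Finiteness lets me assign to each $x\in\bigCal S$ a \emph{height} $h(x)$, the maximal length of an $\bigCal R$-reduction sequence starting at $x$, satisfying $h(x')<h(x)$ whenever $x\,\bigCal R\,x'$; I argue by induction on $h(x)$. The goal is the \emph{confluence} statement: if $x\,\bigCal R^{RT}b$ and $x\,\bigCal R^{RT}c$, then $b$ and $c$ have a common descendant. If one of these reductions uses no step, the claim is immediate; otherwise write $x\,\bigCal R\,b_1\,\bigCal R^{RT}b$ and $x\,\bigCal R\,c_1\,\bigCal R^{RT}c$. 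Local confluence on $x\,\bigCal R\,b_1$ and $x\,\bigCal R\,c_1$ gives $w$ with $b_1\,\bigCal R^{RT}w$ and $c_1\,\bigCal R^{RT}w$. As $h(b_1)<h(x)$, the induction hypothesis at $b_1$ applied to $b_1\,\bigCal R^{RT}b$ and $b_1\,\bigCal R^{RT}w$ yields $d$ with $b\,\bigCal R^{RT}d$ and $w\,\bigCal R^{RT}d$; hence $c_1\,\bigCal R^{RT}w\,\bigCal R^{RT}d$. As $h(c_1)<h(x)$, the induction hypothesis at $c_1$ applied to $c_1\,\bigCal R^{RT}c$ and $c_1\,\bigCal R^{RT}d$ yields $e$ with $c\,\bigCal R^{RT}e$ and $d\,\bigCal R^{RT}e$; combined with $b\,\bigCal R^{RT}d\,\bigCal R^{RT}e$, the element $e$ is a common descendant of $b$ and $c$. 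Finally, confluence gives uniqueness of irreducible descendants: if $(x,y),(x,z)\in\bigCal R^{C}$, then $y$ and $z$ are irreducible and reachable from $x$, hence share a descendant $w$, and an irreducible element is its own only descendant, so $w=y=z$. Together with the assumed finiteness, this is Church-Rosser finiteness.

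\smallskip
\noindent\emph{Anticipated main obstacle.}
The work is all in the $(\Leftarrow)$ direction: local confluence does not by itself close an arbitrary confluence diagram, and a naive one-step induction fails. The remedy is the ``tiling'' argument above --- split off the first step of each branch, join the two successors of $x$ via local confluence, then invoke the induction hypothesis \emph{twice}, at $b_1$ and at $c_1$, each time at an element of strictly smaller height. The only care required is the bookkeeping that the auxiliary elements $w$ and $d$ are indeed reachable from $b_1$ and from $c_1$, so that each appeal to the induction hypothesis is legitimate.
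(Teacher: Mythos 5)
The paper itself does not prove this statement: it is quoted as a known result of Sethi \cite{sethi}, so there is no internal proof to compare against. Your argument is the standard one — Newman's Lemma proved by induction on the height $h(x)$ guaranteed by finiteness, plus the easy observation that confluence yields uniqueness of irreducible descendants — and it is essentially correct; this is in the spirit of (and interchangeable with) Sethi's original argument. Two small caveats are worth recording. First, you silently replace the $\bigCal{R}^T$ of the statement by $\bigCal{R}^{RT}$; this is in fact necessary, because with $\bigCal{R}^T$ taken literally the ``only if'' direction fails on trivial examples (take $\bigCal{S}=\{x,y\}$, $\bigCal{R}=\{(x,y)\}$: the system is Church-Rosser finite, yet no $w$ satisfies $y\,\bigCal{R}^T\,w$, since finiteness forbids cycles). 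So your forward direction, which only delivers $y\,\bigCal{R}^{RT}\,w$ and $z\,\bigCal{R}^{RT}\,w$, proves the intended (Sethi's) statement rather than the literal transcription — a discrepancy in the statement, not in your proof. Second, your definition of $h(x)$ as the maximal length of a reduction sequence from $x$ implicitly reads ``finite'' as: for each $x$ there is a bound on the length of all $\bigCal{R}$-chains issuing from $x$; the paper's phrasing bounds chains per pair $(x,y)$, which by itself does not make $h(x)$ finite. Again this matches Sethi's intent, but you should state explicitly that this is the reading of finiteness your induction uses.
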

The Church-Rosser finiteness property has been equivalently called \emph{confluence} \cite{baader-nipkow}.
From now on, for the sake of making the text shorter, we shall use that word as well.

Levy and Low have shown that the set of reductions $\{$\textsc{loop}, \textsc{in0}, \textsc{out0}, \textsc{in1}, \textsc{out1}$\}$ is confluent \cite{levy-low}, and Lemaic has shown that the set of reductions $\{$\textsc{loop}, \textsc{indiclique}, \textsc{outdiclique} $\}$ is also confluent \cite{lemaic}. 
However, Lin and Jou in their article \cite{lin-jou} did not investigate whether the  confluence is preserved if one includes their three additional reductions (namely, $\Pie{u,v}$, $\Core{u}$ and $\Dome{u,v}$) in the family of reductions. 

\begin{figure}
    \centering
    \begin{tikzpicture}[vertexScale/.style={scale=0.9}, highblue/.style={thick, draw=blue, fill=blue!30}]
        \begin{scope}[scale=0.65]
            \node[vertexScale] (0) at (1,2.5) {a};
            \node[vertexScale] (1) at (1,-2.5) {b};
            \node[vertexScale] (2) at (-1,1) {c};
            \node[vertexScale] (3) at (-1,-1) {d};
            \node[vertexScale] (4) at (0,0) {e};
            \node[vertexScale] (5) at (1,1) {f};
            \node[vertexScale] (6) at (1,-1) {g};
            \node[vertexScale] (7) at (3,1) {h};
            \node[vertexScale] (8) at (3,-1) {i};
              
            \path[->,draw,very thick,highblue] 
                (2) edge node {} (4) 
                (3) edge node {} (4);
              
            \path[->,draw] 
                (0) edge node {} (2) 
                (0) edge [out=180,in=135,looseness=1.4] node {} (3) 
                (1) edge [out=180,in=-135,looseness=1.4] node {} (2) 
                (2) edge node {} (3) 
                (2) edge node {} (5) 
                (3) edge node {} (6) 
                (4) edge node {} (5) 
                (4) edge node {} (6) 
                (5) edge node {} (7) 
                (5) edge node {} (8) 
                (6) edge node {} (7) 
                (6) edge node {} (8) 
                (0) edge node {} (4) 
                (1) edge node {} (4)
                (3) edge [out=-20,in=-160] node {} (8)
                (2) edge [out=20,in=160] node {} (7)
                (8) edge node {} (1)
                (7) edge node {} (0)
                (7) edge [out=-45,in=0,looseness=1.4] node {} (1) 
                (8) edge [out=45,in=0,looseness=1.4] node {} (0) ;
        \end{scope}
    \end{tikzpicture}
    \caption{A digraph showing that the reductions in the article by Lin and Jou does not have the Church-Rosser property.
    If $G=(V,E)$ denotes the displayed digraph, the equalities $\NP{\nsym{G}}{c} = \{ a,b \} \subseteq \{ a,b,c,d \} = \NP{G}{e}$ and $\NP{\nsym{G}}{d} = \{ a,c \} \subseteq \{ a,b,c,d \} = \NP{G}{e}$ hold. 
    Hence we can reduce $G$ using $\Dome{c,e}$ and $\Dome{d.e}$. 
    We can apply $\Dome{d,e}$ followed by $\Dome{c,e}$. 
    If we first apply $\Dome{c,e}$, however, the precondition of $\Dome{d,e}$ is not verified. 
    Indeed, we have $\NP{G - (c,e)}{d} = \{ a,c \} \not \subseteq \{ a,b,d \} = \NP{G - (c,e)}{e}$ and the graph $(V, E - (c,e))$ cannot be reduced further.}
    \label{fig:counter-example}
\end{figure}
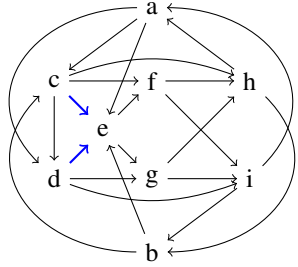

The digraph displayed in Figure~\ref{fig:counter-example} is a counter-example to the (false) claim that the family consisting of $\{$\textsc{loop}, \textsc{in0}, \textsc{out0}, \textsc{in1}, \textsc{out1}, \textsc{pie}, \textsc{core}, \textsc{dome}$\}$ is confluent.

Moreover, for practical purposes, when proving confluence, it is convenient to exclude the reductions subsumed by other reductions. 
For example, \textsc{in1} is subsumed by \textsc{indiclique} which means that if \textsc{in1} is applicable on a given vertex $u$, then \textsc{indiclique} is also applicable on $u$.
Therefore, if $\{\textsc{indiclique}, \bigCal{R}\}$ is confluent, then $\{\textsc{in1}, \textsc{indiclique}, \bigCal{R}\}$ is also confluent.
The following proposition formalizes this property.
\begin{proposition}\label{prop:sub-confluent}
    Let $\bigCal{S}$ be a set and $\bigCal{R}_1,\bigCal{R}_2$ and $\bigCal{R}_3$ three relations on $\bigCal{S}$.
    If $\bigCal{R}_1 \subseteq \bigCal{R}_2$ and $\{ \bigCal{R}_2, \bigCal{R}_3 \}$ is confluent, then $\{ \bigCal{R}_1, \bigCal{R}_2, \bigCal{R}_3 \}$ is also confluent.  
\end{proposition}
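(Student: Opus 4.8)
The plan is to unwind the definitions and observe that the claim reduces to a one-line set identity. Throughout Section~3, and in the statements attributed to Levy--Low and to Lemaic, saying that a family $\{\mathcal{Q}_1,\dots,\mathcal{Q}_k\}$ of relations on $\mathcal{S}$ ``is confluent'' is shorthand for ``$(\mathcal{S},\, \mathcal{Q}_1\cup\cdots\cup\mathcal{Q}_k)$ has the Church--Rosser finiteness property''. So the first step is simply to make that convention explicit. The second step is the key observation: since $\mathcal{R}_1 \subseteq \mathcal{R}_2$, we have
\[
    \mathcal{R}_1 \cup \mathcal{R}_2 \cup \mathcal{R}_3 \;=\; \mathcal{R}_2 \cup \mathcal{R}_3,
\]
because every pair of $\mathcal{R}_1$ already belongs to $\mathcal{R}_2$ and hence contributes nothing new to the union.

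With this identity in hand, the conclusion is immediate. Church--Rosser finiteness of $(\mathcal{S}, \mathcal{R})$ is, by definition, a property that depends only on the relation $\mathcal{R}$ — through its reflexive-transitive closure $\mathcal{R}^{RT}$, its completion $\mathcal{R}^{C}$, and the finiteness condition; equivalently, by Theorem~\ref{th:confluence}, it amounts to finiteness of $(\mathcal{S},\mathcal{R})$ together with the diamond-type condition that $x\mathcal{R}y$ and $x\mathcal{R}z$ imply $y\mathcal{R}^Tw$ and $z\mathcal{R}^Tw$ for some $w\in\mathcal{S}$. Since $\mathcal{R}_1 \cup \mathcal{R}_2 \cup \mathcal{R}_3$ and $\mathcal{R}_2 \cup \mathcal{R}_3$ are literally the same subset of $\mathcal{S}\times\mathcal{S}$, the hypothesis that $\{\mathcal{R}_2, \mathcal{R}_3\}$ is confluent is verbatim the conclusion that $\{\mathcal{R}_1, \mathcal{R}_2, \mathcal{R}_3\}$ is confluent.

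There is no real obstacle in this proof: the only thing requiring care is the bookkeeping around the convention ``a family is confluent'' $=$ ``its union relation is Church--Rosser finite'', together with a remark that this is exactly what justifies dropping subsumed reductions (such as \textsc{in1} in the presence of \textsc{indiclique}, or \textsc{loop}, \textsc{in0} and \textsc{out0} as special cases of other reductions) when one wants to establish confluence of a larger family. Once that convention is spelled out, the set identity above does all the work.
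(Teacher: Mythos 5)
Your proof is correct: under the paper's convention that a family of reductions is confluent exactly when the pair $(\bigCal{S}, \bigCal{Q}_1 \cup \cdots \cup \bigCal{Q}_k)$ is Church--Rosser finite (the convention visibly at work in Theorem~\ref{th:conf}), the inclusion $\bigCal{R}_1 \subseteq \bigCal{R}_2$ gives $\bigCal{R}_1 \cup \bigCal{R}_2 \cup \bigCal{R}_3 = \bigCal{R}_2 \cup \bigCal{R}_3$, so both finiteness and the diamond condition transfer verbatim. The paper states the proposition without an explicit proof, and your set-identity argument is exactly the justification it implicitly relies on.
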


So according to Propostion~\ref{prop:sub-confluent}, proving that $\{$\textsc{loop}, \textsc{indiclique}, \textsc{outdiclique}, \textsc{pie}$\}$ is confluent implies that $\{$\textsc{loop}, \textsc{in0}, \textsc{out0}, \textsc{in1}, \textsc{out1}, \textsc{pie}, \textsc{core}, \textsc{indiclique}, \textsc{outdiclique}$\}$ is also confluent.
Indeed, Lemaic has proved that \textsc{in1} and \textsc{out1} are subsumed by \textsc{indiclique} and \textsc{outdiclique}. 
The same is true for \textsc{in0} and \textsc{out0} if we consider an empty set as a diclique.
For the \textsc{core} reduction we can subsume it with \textsc{indiclique}/\textsc{outdiclique} followed by \textsc{loop}.
Indeed, if $u \in V$ is a core, then $\NP{G}{u} \cup \NS{G}{u} \cup \{u\}$ forms a diclique.
In particular, $\NP{G}{u} \cup \NS{G}{u} = \NP{G}{u} = \NS{G}{u}$ forms a diclique, so we can apply $\InD{u}$ or $\OutD{u}$.
Hence, the neighbors of $u$ will all have loops in $G \circ u$, which means that they have to be added to the minimum FVS which is equivalent to what $\textsc{core}(u)$ should do, except that $\textsc{core}(u)$ will isolate $u$ and add its neighbors to the minimum FVS. 
On the other hand, $\InD{u}$ or $\OutD{u}$ and \textsc{loop} will remove $u$ and vertices in $\NP{G}{u} \cup \NS{G}{u}$ are added to the minimum FVS.

So in order to prove that $\{$\textsc{loop}, \textsc{indiclique}, \textsc{outdiclique}, \textsc{pie}$\}$ is confluent, we use Lemma~\ref{lem:pie}.
\begin{lemma}\label{lem:pie}
    Given a digraph $G=(V,E)$, an arc $(u,v) \in E$, $s \in \NS{G}{v}$ and $p \in \NP{G}{u}$, if $(u,v)$ is acyclic in $G$, then $(u,s)$ and $(p,v)$ are also acyclic.
\end{lemma}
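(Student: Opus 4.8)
The plan is to first restate the hypothesis and both conclusions purely in terms of reachability. A circuit through an arc $(a,b)$ is exactly that arc followed by a directed path from $b$ to $a$, so saying that $(a,b)$ is \emph{acyclic in $G$} amounts to saying that $G$ contains no directed path from $b$ to $a$; read this way, the notion still makes sense when $(a,b)$ is not an arc of $G$, where it asserts that adjoining $(a,b)$ to $G$ creates no circuit — which is the form needed when the lemma is later invoked after a vertex contraction. So I would record the hypothesis as the statement $(\star)$ that there is no directed path from $v$ to $u$ in $G$, and it then suffices to show that there is no directed path from $v$ to $p$ and no directed path from $s$ to $u$.

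Next I would dispatch each conclusion by contraposition, using a one-step extension of a hypothetical path. If there were a directed path $\pi$ from $v$ to $p$ in $G$, then, since $p \in \NP{G}{u}$ gives $(p,u) \in E$, appending this arc to $\pi$ produces a directed path from $v$ to $u$, contradicting $(\star)$; hence $(p,v)$ is acyclic. Symmetrically, if there were a directed path $\pi'$ from $s$ to $u$ in $G$, then, since $s \in \NS{G}{v}$ gives $(v,s) \in E$, prepending this arc to $\pi'$ produces a directed path from $v$ to $u$, again contradicting $(\star)$; hence $(u,s)$ is acyclic. Because paths in this paper may repeat vertices, these concatenations are legitimate as they stand and need no simplification.

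There is no genuine obstacle: the core is two lines of walk concatenation. The only point that warrants care is pinning down what ``acyclic'' should mean for the pairs $(p,v)$ and $(u,s)$, since in the intended application — with $G$ taken to be $\nsym{H}$ for the digraph $H$ being reduced, so that ``acyclic'' coincides with the precondition of \textsc{pie} — one of these pairs is a new arc created by the contraction of $u$ or of $v$; once the reachability reformulation above is adopted, this subtlety disappears and the argument goes through verbatim.
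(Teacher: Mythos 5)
Your proof is correct: the reachability reformulation (``$(a,b)$ acyclic iff there is no directed path from $b$ to $a$'') and the two one-step walk concatenations using the arcs $(v,s)$ and $(p,u)$ establish exactly the stated lemma, and your remark on how to read ``acyclic'' for pairs such as $(u,s)$ or $(p,v)$ that need not be arcs of $G$ matches how the lemma is invoked after a contraction in the proof of Theorem~\ref{th:conf}. The paper, being an extended abstract, omits its own proof of Lemma~\ref{lem:pie}, but your argument is the natural one and there is nothing to add beyond it.
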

We can now state the following Theorem for the confluence of the set of binary relations\\ $\{\bigCal{R}_{\textsc{loop}}, \bigCal{R}_{\textsc{indiclique}}, \bigCal{R}_{\textsc{outdiclique}}, \bigCal{R}_{\textsc{pie}}\}$.
\begin{theorem}\label{th:conf}
    Let $\bigCal{G}$ be the set of all digraphs and $\bigCal{R} = \bigCal{R}_{\textsc{loop}} \cup \bigCal{R}_{\textsc{indiclique}} \cup \bigCal{R}_{\textsc{outdiclique}} \cup \bigCal{R}_{\textsc{pie}}$ a binary relation on $\bigCal{G}$. 
    Then, $(\bigCal{G}, \bigCal{R})$ is confluent.
\end{theorem}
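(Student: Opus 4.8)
The plan is to apply Sethi's criterion, Theorem~\ref{th:confluence}. It therefore suffices to check that $(\bigCal{G},\bigCal{R})$ is finite and that it is locally confluent, that is, whenever a digraph $G$ reduces in one step to $G_1$ and also in one step to $G_2$, the digraphs $G_1$ and $G_2$ admit a common reduct (a digraph reachable from each by a possibly empty chain of $\bigCal{R}$-reductions). Finiteness is immediate: each application of $\textsc{loop}$, $\textsc{indiclique}$ or $\textsc{outdiclique}$ deletes one vertex, while $\textsc{pie}$ keeps the vertex set and deletes one arc; hence the pair $(|V|,|E|)$ strictly decreases in lexicographic order at every step, and since it always lies in the finite range $\{0,\dots,|V(G)|\}\times\{0,\dots,|V(G)|^2\}$, the length of any reduction sequence out of $G$ is bounded.

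For local confluence I would first eliminate all critical pairs in which $\textsc{pie}$ does not occur. Lemaic's result~\cite{lemaic} says that $\{\textsc{loop},\textsc{indiclique},\textsc{outdiclique}\}$ is confluent, hence --- being finite --- locally confluent; the digraphs witnessing this are obtained using only reductions of $\bigCal{R}$, so every critical pair among $\textsc{loop}$, $\textsc{indiclique}$ and $\textsc{outdiclique}$ is already joined inside $\bigCal{R}$. What remains are the pairs in which at least one reduction is $\textsc{pie}$, namely $(\textsc{pie},\textsc{pie})$, $(\textsc{pie},\textsc{loop})$, $(\textsc{pie},\textsc{indiclique})$ and $(\textsc{pie},\textsc{outdiclique})$; the last reduces to the third by reversing all arcs, an operation that exchanges $\textsc{indiclique}$ with $\textsc{outdiclique}$, commutes with the contraction $G\circ u$ and with arc deletion, and maps an applicable instance $\Pie{a,b}$ to the applicable instance $\Pie{b,a}$ on the reversed digraph.

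The first two remaining families are routine. If $\Pie{a,b}$ and $\Pie{c,d}$ both apply to $G$, then, since $(a,b)$ is not a $2$-way arc, one has $\nsym{(G-(a,b))}=\nsym{G}-(a,b)$; deleting $(a,b)$ therefore neither makes $(c,d)$ a $2$-way arc nor creates a circuit of $\nsym{\cdot}$ through it, so $\Pie{c,d}$ still applies and both orders reach $G-(a,b)-(c,d)$. If $\Pie{a,b}$ and $\Loop{w}$ apply, then when $w\in\{a,b\}$ the deletion of $w$ already deletes $(a,b)$, so $G-(a,b)$ reduces by $\Loop{w}$ to $G-w$; and when $w\notin\{a,b\}$ the two reductions commute to $G-w-(a,b)$. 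The core of the proof is the pair $\Pie{a,b}$ versus $\InD{u}$, where $(u,u)\notin E$ and $\NP{G}{u}$ is a diclique. I would distinguish cases according to how $u$ meets $(a,b)$. Suppose first $u\notin\{a,b\}$. If $a\in\NP{G}{u}$ and $b\in\NS{G}{u}$, then $(a,b)$ is one of the arcs created by the contraction, so deleting it and then contracting $u$ gives back $G\circ u$; thus $G-(a,b)$ reduces directly to $G_1=G\circ u$. If instead $b\in\NP{G}{u}$ and $a\in\NS{G}{u}$, then --- using that $(a,b)$ is acyclic in $\nsym{G}$, which forces $a\notin\NP{G}{u}$ and then $(u,b)\in E$ --- the contraction creates a loop at $b$, and applying $\Loop{b}$ on both sides deletes $b$ together with $(a,b)$, which is the only arc by which the two digraphs differ. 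In the remaining configuration $(a,b)$ survives the contraction as a non-$2$-way arc and one checks the identity $(G\circ u)-(a,b)=(G-(a,b))\circ u$, so the two reductions commute. When $u\in\{a,b\}$, say $u=a$, the digraph $G\circ u$ and the digraph obtained by deleting $(u,b)$ and then contracting $u$ differ only in arcs of the form $(p,b)$ with $p\in\NP{G}{u}$; each such arc is either deleted by $\textsc{pie}$ or is incident to a vertex at which the contraction created a loop, removable by $\textsc{loop}$, and in this way the two digraphs are brought together.

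In every subcase in which $\textsc{pie}$ is applied inside a contracted digraph, one must verify that the arc to be deleted still satisfies the $\textsc{pie}$ precondition there: it is not a $2$-way arc, and no circuit of $\nsym{(G\circ u)}$ passes through it. This is precisely where Lemma~\ref{lem:pie} enters. A circuit of $\nsym{(G\circ u)}$ through such an arc, after each newly created arc $(p,s)$ (with $p\in\NP{G}{u}$ and $s\in\NS{G}{u}$) is expanded back into the walk $p\to u\to s$, becomes a closed walk of $G$ through an arc that Lemma~\ref{lem:pie} certifies to be acyclic in $\nsym{G}$ --- and the diclique hypothesis on $\NP{G}{u}$ is exactly what prevents the expanded walk from traversing $2$-way arcs or degenerating. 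I expect this last step --- propagating acyclicity from $G$ to $G\circ u$ and keeping exact track, under $G\circ u$, of which arcs are newly created, which become $2$-way, and which become loops --- to be the main obstacle. Everything else is a finite, essentially mechanical verification, after which Theorem~\ref{th:confluence} yields confluence of $(\bigCal{G},\bigCal{R})$.
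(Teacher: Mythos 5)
Your proposal follows essentially the same route as the paper's proof: Sethi's criterion plus a finiteness bound, discharging the critical pairs not involving \textsc{pie} via Lemaic's confluence of $\{\textsc{loop},\textsc{indiclique},\textsc{outdiclique}\}$, commuting \textsc{pie} with \textsc{pie} and with \textsc{loop}, and resolving the \textsc{pie}/\textsc{indiclique} (and, by symmetry, \textsc{outdiclique}) overlaps through Lemma~\ref{lem:pie} together with the identity $(G\circ x)-(u,v)=(G-(u,v))\circ x$. If anything, your case analysis of how the \textsc{pie} arc interacts with the contracted vertex (re-created arcs, loops at vertices of $\NP{G}{u}\cap\NS{G}{u}$) is finer than the paper's terse treatment, and the step you flag as the main remaining obstacle is precisely what the paper delegates to Lemma~\ref{lem:pie}.
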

\begin{proof}
    According to Theorem~\ref{th:confluence} it is enough to prove that $(\bigCal{G}, \bigCal{R})$ is finite and that for $G,G_1,G_2 \in \bigCal{G}$ if $(G,G_1) \in \bigCal{R}$ and $(G,G_2) \in \bigCal{R}$, then there exist $G' \in \bigCal{G}$ such that $(G_1,G') \in \bigCal{R}^T$ and $(G_2,G') \in \bigCal{R}^T$. 
    Thanks to Proposition~\ref{prop:sub-confluent}, it is sufficient to prove this only for $\bigCal{R}_{\textsc{pie}}$ and the other relations, since $\{\bigCal{R}_{\textsc{loop}}, \bigCal{R}_{\textsc{indiclique}}, \bigCal{R}_{\textsc{outdiclique}}\}$ was proved to be confluent \cite{lemaic}. 

    Let $G=(V,E) \in \bigCal{G}$ be a digraph, $x \in V$ and $(u,v) \in V$.
    \textsc{pie} being a reduction, then its successive application are bounded by $|V|^2$, hence $(\bigCal{G}, \bigCal{R})$ is finite.
    Now, assume that $\Pie{u,v}$ is applicable.
    
    If $\Loop{x}$ is applicable, then it remains applicable after applying $\Pie{u,v}$, even if $x = u$ or $x = v$. 
    So it is enough to consider $G' = G - x$.
    Otherwise, the two reductions can be applied in any order and $G' = (G - x) - (u,v)$.
    
    Thanks to Lemma~\ref{lem:pie}, if $\InD{x}$ (resp. $\OutD{x}$) is applied first, and $x = u$ or $x = v$, then $\Pie{p,v}$ or $\Pie{u,s}$ is applicable, $\forall p \in \NP{G}{x=u}$ or $\forall s \in \NS{G}{x=v}$ (the same goes if we first apply $\OutD{x}$).
    Otherwise, if we apply $\Pie{u,v}$ first, the applicability of $\InD{u}$ (resp. $\OutD{u}$) remains valid.
    In both cases, we can get the same digraph $G' = (G \circ x) - \{\bigcup_{p \in \NP{G}{x}}(p,v)\} = (G - (u,v)) \circ x$ if $x=u$, or $G' = (G \circ x) - \{\bigcup_{s \in \NS{G}{x}}(u,s)\} = (G - (u,v)) \circ x$ if $x=v$.
    Obviously, this remains true if $x \neq u$ and $x \neq v$, with $G' = (G \circ x) - (u,v) = (G - (u,v)) \circ x$.
    
    Finally, it is easy to see that $(G - (u,v)) - (x,y) = (G - (x,y)) - (u,v)$, if $\Pie{x,y}$ is applicable for a given $(x,y) \in E$.    
    Therefore, we can conclude that $(\bigCal{G}, \bigCal{R})$ is confluent.
\end{proof}

\section{Concluding remarks}

In this extended abstract we focus on reductions for the minimum feedback vertex set problem, exploring their properties with an emphasis on confluence. 
By identifying a subset of reductions with confluence property and considering their adaptability, this work contributes to the understanding of graph reductions and their potential impact on algorithmic advancements.
The exploration of the confluence property not only enhances our comprehension of algorithmic strategies but also opens avenues for parallelization and speed improvements in sequential algorithms. 
In future work, we will investigate the \textsc{dome} reduction and explore how it can be modified so that it can be included in a confluent set of reductions considered in this extended abstract.

\bibliographystyle{eptcs}
\bibliography{references}

\end{document}